\documentclass[11pt]{article}

\usepackage[a4paper,margin=30mm]{geometry}
\usepackage[T1]{fontenc}
\usepackage[utf8]{inputenc}
\usepackage{lmodern}
\usepackage{microtype}
\usepackage{amsmath,amssymb,amsthm,mathtools}
\usepackage{enumitem}
\usepackage{xcolor}
\usepackage{hyperref}
\usepackage[nameinlink,capitalize]{cleveref}

\hypersetup{
  colorlinks=true,
  linkcolor=blue!55!black,
  citecolor=blue!55!black,
  urlcolor=blue!55!black,
  pdftitle={Flux-explicit Cheeger bounds for magnetic Laplacians on compact metric graphs},
  pdfauthor={B. E. Dahlke}
}

\setlist[itemize]{leftmargin=1.4em}

\theoremstyle{plain}
\newtheorem{theorem}{Theorem}[section]
\newtheorem{proposition}[theorem]{Proposition}

\newtheorem{corollary}[theorem]{Corollary}

\theoremstyle{definition}
\newtheorem{definition}[theorem]{Definition}
\newtheorem{example}[theorem]{Example}

\theoremstyle{remark}
\newtheorem{remark}[theorem]{Remark}

\newcommand{\dd}{\mathrm d}
\newcommand{\ii}{\mathrm i}
\newcommand{\ee}{\mathrm e}
\newcommand{\R}{\mathbb R}
\newcommand{\Z}{\mathbb Z}

\newcommand{\cR}{\mathcal R}
\newcommand{\cC}{\mathcal C}

\newcommand{\dist}{\operatorname{dist}}

\newcommand{\dom}{\operatorname{dom}}

\newcommand{\norm}[1]{\lVert #1\rVert}

\newcommand{\qA}{\mathfrak q_A}

\crefname{theorem}{Theorem}{Theorems}
\Crefname{theorem}{Theorem}{Theorems}
\crefname{proposition}{Proposition}{Propositions}
\Crefname{proposition}{Proposition}{Propositions}
\crefname{lemma}{Lemma}{Lemmas}
\Crefname{lemma}{Lemma}{Lemmas}
\crefname{corollary}{Corollary}{Corollaries}
\Crefname{corollary}{Corollary}{Corollaries}
\crefname{definition}{Definition}{Definitions}
\Crefname{definition}{Definition}{Definitions}
\crefname{example}{Example}{Examples}
\Crefname{example}{Example}{Examples}

\begin{document}

\begin{center}
{\fontsize{17}{21}\selectfont\bfseries Flux-explicit Cheeger bounds for magnetic Laplacians\\[2mm]
on compact metric graphs}\par
\vspace{6mm}
{\normalsize B. E. Dahlke\footnote{Email: \href{mailto:kontakt@bjoern-dahlke.de}{kontakt@bjoern-dahlke.de}. ORCID: 0009-0003-3861-1800.}}\par
\vspace{2mm}
{\normalsize May 25, 2026}
\end{center}

\vspace{4mm}

\begin{abstract}
Let $\Gamma$ be a finite compact connected metric graph and let
$A\in L^\infty(\Gamma)$ be a real magnetic potential.  The magnetic
Laplacian $H_A$ with standard vertex conditions is defined by the closed
quadratic form
\[
        \mathfrak q_A[u]=\sum_{e}\int_e |(-\ii\partial_x-A_e)u_e|^2\,\dd x .
\]
A magnetic Cheeger constant is introduced by adding to the usual boundary
term the frustration index of the potential on subgraphs.  The first point of
the paper is that, on a metric graph, this frustration index is exactly a
finite-dimensional $\ell^1$ flux distance determined by the periods of $A$ on
cycles.  Consequently the Cheeger constant can be written directly in terms of
Aharonov--Bohm fluxes.  We prove
\[
        \lambda_0(H_A)\geq \frac{1}{8}\,h_A(\Gamma)^2,
\]
where $\lambda_0(H_A)$ is the bottom of the spectrum.  In particular, if
$\delta_\Gamma(\Phi)$ denotes the flux distance of the global cycle-flux vector
from the integral flux lattice, then
\[
        \lambda_0(H_A)\geq \frac{1}{8}
        \min\left\{h_{\rm pr}(\Gamma),\frac{\delta_\Gamma(\Phi)}{|\Gamma|}\right\}^2 .
\]
The estimate gives explicit $L^2$ decay bounds for the magnetic heat semigroup
and for its magnetic energy.  The constants are not asserted to be sharp; the
emphasis is on the flux dependence and on the self-contained metric-graph
formulation.
\end{abstract}

\vspace{3mm}
\noindent\textbf{Mathematics Subject Classification (2020).} 34B45, 35P15, 35Q40, 81Q35, 58J50.

\medskip
\noindent\textbf{Keywords.} Metric graph; quantum graph; magnetic Laplacian; Cheeger inequality; Aharonov--Bohm flux; heat semigroup; frustration index.
\clearpage

\section{Introduction}

Metric graphs provide a one-dimensional model for wave and diffusion phenomena
on thin networks.  In the quantum-graph setting the basic operator is the
second derivative on each edge, coupled by vertex conditions.  Standard
references include the monograph of Berkolaiko and Kuchment \cite{BerkolaikoKuchment2013}
and the survey of Kuchment \cite{Kuchment2008}.  The corresponding spectral
geometry has developed in several directions, among them isoperimetric bounds
for eigenvalues.  Cheeger-type inequalities for the standard Laplacian on
metric graphs go back to Nicaise \cite{Nicaise1987}; they are discussed, for
example, by Post \cite{Post2009} and by Kennedy and Mugnolo \cite{KennedyMugnolo2016}.
Related estimates for spectral gaps and connectivity appear in
\cite{Kurasov2013,BerkolaikoKennedyKurasovMugnolo2017}.

The presence of a magnetic potential changes the bottom of the spectrum only
through its gauge class.  On a compact graph this gauge class is encoded by the
periods of the potential around a basis of cycles.  The resulting flux
parameters are the graph-theoretic analogue of Aharonov--Bohm fluxes.  The
spectral dependence of quantum graphs on magnetic perturbations is a standard
part of the theory; see, for instance, Berkolaiko and Weyand
\cite{BerkolaikoWeyand2014}.  On the discrete side, magnetic Laplacians were
studied by Shubin \cite{Shubin1994}, by Sunada \cite{Sunada1994}, and by Colin
de Verdiere, Torki-Hamza and Truc \cite{ColinTorkiTruc2011}, among others.
For compact Riemannian manifolds, magnetic spectral estimates of this type go
back in particular to Shigekawa \cite{Shigekawa1987}.  Lange, Liu,
Peyerimhoff and Post \cite{LangeLiuPeyerimhoffPost2015} introduced a Cheeger
constant involving a frustration index and proved magnetic Cheeger inequalities
for graph Laplacians with signatures, discrete magnetic Laplacians and magnetic
Laplacians on closed Riemannian manifolds.

The purpose of the present note is more specific.  For compact metric graphs we
write the magnetic frustration index directly in flux coordinates.  The result
is a Cheeger lower bound whose quantities are computable from cuts, lengths and
cycle fluxes.  The subsequent semigroup estimate is then immediate from the
spectral theorem, but it gives a useful formulation: non-integral fluxes open a
spectral gap and hence force exponential $L^2$ decay of the magnetic heat flow.
The gap closes precisely when the flux vector belongs to $2\pi\Z^\beta$, where
$\beta$ is the first Betti number.  For a single circle this is exactly the
usual Aharonov--Bohm periodicity; the largest gap occurs at half-periodic flux.

The note is deliberately local in scope.  It does not claim a new abstract
magnetic Cheeger mechanism beyond the frustration-index framework of
\cite{LangeLiuPeyerimhoffPost2015}.  The contribution is the metric-graph
identification of the frustration term with an explicit finite-dimensional flux
norm and the consequent heat-flow bounds.  The numerical constant in the
Cheeger inequality is not optimized.

\section{Magnetic Laplacians and fluxes}

Let $\Gamma=(V,E,\ell)$ be a finite compact connected metric graph.  Each edge
$e\in E$ is identified with an interval $(0,\ell_e)$, where $\ell_e>0$, and
\[
        L_\Gamma:=|\Gamma|:=\sum_{e\in E}\ell_e
\]
denotes the total length.  Multiple edges and loops are allowed.  An arbitrary
orientation is fixed on every edge.  The Hilbert space is
\[
        L^2(\Gamma):=\bigoplus_{e\in E}L^2(0,\ell_e).
\]
The standard first-order Sobolev space is
\[
 H^1(\Gamma):=\left\{u=(u_e)_{e\in E}: u_e\in H^1(0,\ell_e),
          \ u \text{ is continuous at every vertex}\right\} .
\]

Let $A=(A_e)_{e\in E}$ be real-valued with $A_e\in L^\infty(0,\ell_e)$.  We use
the sign convention
\[
        D_A u :=(-\ii\partial_x-A)u .
\]
The magnetic quadratic form is
\begin{equation}\label{eq:qA}
        \qA[u]:=\sum_{e\in E}\int_0^{\ell_e}|D_Au_e(x)|^2\,\dd x,
        \qquad \dom\qA=H^1(\Gamma).
\end{equation}
Since $A\in L^\infty$, this form is closed and lower semibounded.  Its
Friedrichs operator is denoted by $H_A$.  The operator acts as
$(-\ii\partial_x-A_e)^2$ on each edge, with continuity at vertices and magnetic
Kirchhoff conditions
\begin{equation}\label{eq:mag-kirchhoff}
        \sum_{e\sim v}\nu_{v,e}\bigl(D_Au_e\bigr)(v)=0,
        \qquad v\in V,
\end{equation}
where $\nu_{v,e}=1$ if the chosen coordinate on $e$ points away from $v$ and
$\nu_{v,e}=-1$ if it points into $v$.  The spectrum is discrete.  Its lowest
eigenvalue is
\begin{equation}\label{eq:lambda0}
        \lambda_0(A):=\lambda_0(H_A)
        =\inf_{0\neq u\in H^1(\Gamma)}\frac{\qA[u]}{\norm{u}_{L^2(\Gamma)}^2} .
\end{equation}

The first Betti number is
\[
        \beta=\beta_1(\Gamma)=|E|-|V|+1 .
\]
Choose an oriented integral cycle basis
\[
        \cC=(\gamma_1,\ldots,\gamma_\beta).
\]
This basis is chosen after inserting artificial vertices if necessary.  The cycle matrix associated
with this basis is the matrix $C=(\sigma_{je})\in\{-1,0,1\}^{\beta\times E}$,
where $\sigma_{je}=1$ if $\gamma_j$ traverses $e$ with its orientation,
$\sigma_{je}=-1$ if it traverses it against the orientation, and
$\sigma_{je}=0$ otherwise.  The flux vector of $A$ relative to $\cC$ is
\begin{equation}\label{eq:fluxvector}
        \Phi(A)=\bigl(\Phi_1(A),\ldots,\Phi_\beta(A)\bigr)
        \in \R^\beta/(2\pi\Z)^\beta,
\end{equation}
where
\begin{equation}\label{eq:fluxcomponent}
        \Phi_j(A)=\sum_{e\in E}\sigma_{je}\int_0^{\ell_e}A_e(x)\,\dd x
        \quad \text{modulo }2\pi .
\end{equation}
If $\beta=0$, the vector is absent and every magnetic potential is gauge
removable.  In general, two potentials with the same flux vector are unitarily
equivalent by multiplication with a single-valued phase.  Thus the spectrum of
$H_A$ depends on $A$ only through the class \eqref{eq:fluxvector}.

\section{Frustration index in flux coordinates}

The magnetic Cheeger quantity used below is defined on regular domains.  A
regular domain $U\subset\Gamma$ is a non-empty open subset obtained, after
cutting finitely many points, as a finite union of open intervals and full
edge-pieces.  The case $U=\Gamma$ is included.  Its length is denoted by $|U|$.
The boundary size $|\partial U|$ is the number of loose endpoints created by
cutting, counted with multiplicity in the cut graph.  This convention agrees
with the one-dimensional coarea formula.  For example, removing one point from
a circle creates two boundary endpoints.

The completion of $U$ after cutting is again a finite compact metric graph,
possibly disconnected and with boundary vertices.  We denote it by $\Gamma_U$.
Its oriented edge set is written as $E_U$, and its first Betti number as
$\beta(U)$.  The restriction of $A$ to $U$ defines edge integrals
\[
        a_I(A):=\int_I A\,\dd x,\qquad I\in E_U .
\]
After choosing a cycle basis of $\Gamma_U$, let $C_U$ be the corresponding
cycle matrix and let
\[
        \Phi_U(A)\in \R^{\beta(U)}/(2\pi\Z)^{\beta(U)}
\]
be the vector of fluxes through those cycles.  If $\beta(U)=0$, all formulas
below are read with the value zero.

\begin{definition}[Flux distance]\label{def:fluxdistance}
Let $U$ be a regular domain.  The flux distance of $A$ on $U$ is
\begin{equation}\label{eq:deltaU}
        \delta_A(U):=
        \min_{m\in\Z^{\beta(U)}}
        \inf\left\{\sum_{I\in E_U}|\eta_I|:
        \eta\in\R^{E_U},\ C_U\eta=-\widehat\Phi_U(A)+2\pi m\right\} ,
\end{equation}
where $\widehat\Phi_U(A)$ is any real representative of the flux vector.  The
value is independent of the representative.  It is also independent, up to the
obvious coordinate change, of the chosen cycle basis.
\end{definition}

The sign in \eqref{eq:deltaU} is a matter of convention and comes from the
choice $D_A=-\ii\partial_x-A$.  Replacing it by $+\widehat\Phi_U(A)$ gives the
same number after changing $m$.

\begin{definition}[Frustration index]\label{def:frustration}
For a regular domain $U$, define
\begin{equation}\label{eq:iotaA}
        \iota_A(U):=
        \inf_{\tau\in W^{1,1}(U;S^1)}\int_U |D_A\tau|\,\dd x .
\end{equation}
The phase $\tau$ is required to be continuous at the interior vertices of
$\Gamma_U$; no condition is imposed at boundary endpoints.
\end{definition}

\begin{proposition}[Frustration equals flux distance]\label{prop:frustration-flux}
For every regular domain $U\subset\Gamma$,
\begin{equation}\label{eq:iota-delta}
        \iota_A(U)=\delta_A(U).
\end{equation}
In particular, $\iota_A(U)=0$ if and only if all fluxes of $A$ through cycles
contained in $U$ belong to $2\pi\Z$.
\end{proposition}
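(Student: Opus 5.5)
We prove the two inequalities $\iota_A(U)\le\delta_A(U)$ and $\iota_A(U)\ge\delta_A(U)$ separately. The bridge is a lifting of $S^1$-valued phases edge by edge. For $\tau\in W^{1,1}(U;S^1)$ we write on each edge $I\in E_U$ $\tau_I=\ee^{\ii\theta_I}$ with $\theta_I\in W^{1,1}(I;\R)$; the lift exists because $W^{1,1}$ functions on an interval are absolutely continuous. Then $D_A\tau_I=(\theta_I'-A_I)\tau_I$, so
\[
 \int_U|D_A\tau|\,\dd x=\sum_{I\in E_U}\int_I|\theta_I'-A_I|\,\dd x .
\]
Set $g_I:=\theta_I'-A_I\in L^1(I)$ and $\eta_I:=\int_I g_I\,\dd x$. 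Continuity of $\tau$ at interior vertices means that the endpoint values of the $\theta_I$ agree modulo $2\pi$ at each interior vertex. Summing $\int_I\theta_I'$ with the signs $\sigma_{jI}$ around a cycle $\gamma_j$ of $\Gamma_U$ therefore gives an element $2\pi m_j$ with $m_j\in\Z$. Hence
\[
 C_U\eta=-\widehat\Phi_U(A)+2\pi m ,
\]
where $\widehat\Phi_U(A)$ is the real representative defined by the edge integrals $a_I(A)$. Boundary endpoints impose no constraint and lie on no cycle, so they do not enter.

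\emph{Lower bound.} We have $\sum_I|\eta_I|\le\sum_I\int_I|g_I|=\int_U|D_A\tau|$. The vector $\eta$ is admissible in \eqref{eq:deltaU}, so $\delta_A(U)\le\int_U|D_A\tau|$, and taking the infimum over $\tau$ gives $\delta_A(U)\le\iota_A(U)$.

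\emph{Upper bound.} Fix $m\in\Z^{\beta(U)}$ and an admissible $\eta$. On each edge $I$ choose $g_I\equiv\eta_I/\ell_I$, a constant of one sign, so that $\int_I|g_I|=|\eta_I|$. Define $\theta_I'=A_I+g_I$. The remaining task is to choose the initial values of the $\theta_I$ so that $\ee^{\ii\theta}$ is continuous at the interior vertices. This is the only step requiring care, and we plan it as follows. Work with each connected component of $\Gamma_U$ separately. The increments $c_I:=\int_I\theta_I'=a_I(A)+\eta_I$ form a $1$-cochain on the component, and continuity of $\ee^{\ii\theta}$ at the vertices asks for a vertex potential $\phi$ with $c_I\equiv\phi(t(I))-\phi(s(I))$ modulo $2\pi$. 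Such a $\phi$ exists exactly when every cycle sum of $c$ lies in $2\pi\Z$. By construction $C_Uc=\widehat\Phi_U+C_U\eta=2\pi m$, so the condition holds on the cycle basis. Since every integral cycle of $\Gamma_U$ is an integer combination of the basis cycles, it holds on all cycles. The construction of $\phi$ is the standard one: define it along a spanning tree and check consistency on the non-tree edges, which is precisely the fundamental-cycle condition. Boundary endpoints can be treated as free degree-one vertices and need no matching.

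The phase $\tau$ built this way satisfies $\int_U|D_A\tau|=\sum_I|\eta_I|$. Taking the infimum over $\eta$ and then the minimum over $m$ gives $\iota_A(U)\le\delta_A(U)$. As a side remark, the minimum in \eqref{eq:deltaU} is attained, since this is an $\ell^1$ distance from a lattice coset to a linear subspace, evaluated through the finitely many $m$ in a bounded region. This attainment is not needed for the equality, however.

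\emph{The final claim.} By \eqref{eq:iota-delta}, $\iota_A(U)=0$ holds if and only if $\delta_A(U)=0$. This happens if and only if $\eta=0$ is admissible for some $m$, that is, $\widehat\Phi_U(A)\in2\pi\Z^{\beta(U)}$. To see this, note that the set of admissible $\eta$ is a closed affine set for each $m$, and the coset $\widehat\Phi_U+2\pi\Z^{\beta(U)}$ is discrete. So if the distance is $0$, the infimum is $0$ for some fixed $m$, which forces $0\in\{\eta:C_U\eta=-\widehat\Phi_U+2\pi m\}$. Finally, the cycles of $\Gamma_U$ are exactly the cycles contained in $U$, so the condition on the basis fluxes is equivalent to the condition on all cycle fluxes in $U$.
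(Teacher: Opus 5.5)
Your proof is correct and follows essentially the same route as the paper. You lift $\tau$ edgewise to get an admissible $\eta$ with $\sum_I|\eta_I|\le\int_U|D_A\tau|$, and conversely you realize any admissible $\eta$ by a phase with constant magnetic derivative $\eta_I/\ell_I$ on each edge, fixing the vertex phases by the spanning-tree construction. Two of your additions are accurate and make the argument slightly more complete than the paper's: the remark that the cycle basis of $\Gamma_U$ must be an integral ($\Z$-)basis for the cycle condition to extend to all cycles, and the short discreteness argument for the ``in particular'' clause, which the paper leaves implicit.
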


\begin{proof}
It is enough to argue on each connected component of $U$, and we suppress this
minor notational point.  Let $\tau\in W^{1,1}(U;S^1)$.  On each edge interval
$I$ one may write $\tau=\ee^{\ii\theta_I}$ with
$\theta_I\in W^{1,1}(I;\R)$.  Then
\[
        D_A\tau=(\theta_I'-A_I)\tau,
        \qquad |D_A\tau|=|\theta_I'-A_I| .
\]
Set
\[
        \eta_I:=\int_I(\theta_I'-A_I)\,\dd x .
\]
For every cycle of $\Gamma_U$, the sum of the phase increments
$\int_I\theta_I'$ is an integral multiple of $2\pi$, because $\tau$ is
single-valued.  Hence $C_U\eta=-\widehat\Phi_U(A)+2\pi m$ for some
$m\in\Z^{\beta(U)}$.  Moreover,
\[
        \sum_{I\in E_U}|\eta_I|
        \leq \sum_{I\in E_U}\int_I|\theta_I'-A_I|\,\dd x
        =\int_U|D_A\tau|\,\dd x .
\]
Taking the infimum over $\tau$ gives $\delta_A(U)\leq\iota_A(U)$.

Conversely, fix $\eta\in\R^{E_U}$ satisfying
$C_U\eta=-\widehat\Phi_U(A)+2\pi m$.  We construct a phase whose magnetic
derivative has integral $\eta_I$ on each edge $I$.  Put
$a_I=\int_I A_I\,\dd x$ and prescribe the phase increment
$s_I:=\eta_I+a_I$ along $I$.  The cycle condition says precisely that the sum
of these increments along every cycle lies in $2\pi\Z$.  Therefore vertex
phases can be assigned consistently modulo $2\pi$: choose one vertex phase on a
spanning tree and extend along the tree, while the cycle condition gives
compatibility on the remaining edges.  On each interval $I=(0,\ell_I)$ choose
$\theta_I\in W^{1,\infty}(I)$ satisfying
\[
        \theta_I'(x)-A_I(x)=\frac{\eta_I}{\ell_I}
        \quad\text{a.e. on }I
\]
and having the prescribed endpoint phases.  Then $\tau_I=\ee^{\ii\theta_I}$ is
single-valued on $\Gamma_U$ and
\[
        \int_I |D_A\tau_I|\,\dd x
        =\int_I \left|\frac{\eta_I}{\ell_I}\right|\,\dd x
        =|\eta_I| .
\]
Thus $\iota_A(U)\leq\sum_I|\eta_I|$.  Taking the infimum over all admissible
$\eta$ proves the reverse inequality.
\end{proof}

\section{Magnetic Cheeger inequality}

\begin{definition}[Magnetic Cheeger constant]\label{def:magcheeger}
The magnetic Cheeger constant associated with $A$ is
\begin{equation}\label{eq:hA}
        h_A(\Gamma):=
        \inf_{U\in\cR(\Gamma)}
        \frac{|\partial U|+\delta_A(U)}{|U|},
\end{equation}
where $\cR(\Gamma)$ denotes the class of regular domains of positive length.
\end{definition}

The denominator in \eqref{eq:hA} is $|U|$, not
$\min\{|U|,|\Gamma\setminus U|\}$.  This is the one-sided Cheeger constant
which is appropriate for the bottom eigenvalue.  If the magnetic flux is
trivial, $U=\Gamma$ gives $h_A(\Gamma)=0$, in agreement with
$\lambda_0(H_A)=0$.

\paragraph{Phase regularity on level sets.}
We shall use the following elementary consequence of the one-dimensional
Sobolev chain rule.  If $u\in H^1(\Gamma)$ and $t>0$ is such that
$U_t=\{x\in\Gamma: |u(x)|>\sqrt t\}$ is a regular domain, then the phase
$\tau=u/|u|$ belongs to $W^{1,1}(U_t;S^1)$ and is continuous at the interior
vertices of $\Gamma_{U_t}$.  Indeed, on $U_t$ the map
$z\mapsto z/|z|$ is evaluated away from the origin.  The Sobolev chain rule
on each edge gives $\tau\in W^{1,1}$, with
$|\tau'|\leq C_t |u'|$ a.e. on $U_t$; since $U_t$ has finite length and
$u'\in L^2$, the right-hand side belongs to $L^1(U_t)$.  Vertex continuity
follows from the continuity of $u$ and from $|u|>\sqrt t$ at each interior
vertex of $U_t$.

\begin{theorem}[Magnetic Cheeger bound]\label{thm:cheeger}
Let $\Gamma$ be a finite compact connected metric graph and let
$A\in L^\infty(\Gamma)$ be real-valued.  Then
\begin{equation}\label{eq:cheeger-main}
        \lambda_0(H_A)\geq \frac{1}{8}\,h_A(\Gamma)^2 .
\end{equation}
\end{theorem}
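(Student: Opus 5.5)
We follow the Lange--Liu--Peyerimhoff--Post scheme, applying a coarea argument to $|u|^2$ and using the phase of $u$ on superlevel sets as a competitor in the frustration index. By \eqref{eq:lambda0} it suffices to prove
\[
\qA[u]\geq \tfrac18 h_A(\Gamma)^2\norm{u}^2
\]
for every $u\in H^1(\Gamma)$ with $u\neq0$. Put $\phi=|u|^2$ and $U_t=\{\phi>t\}$ for $t>0$. For a.e.\ $t$ the set $U_t$ is a regular domain: the critical values of the absolutely continuous function $\phi$ on each edge form a null set, and $\phi$ is not constant on any interval at such levels. We then work with the integral
\[
\int_0^\infty\bigl(|\partial U_t|+\delta_A(U_t)\bigr)\,\dd t\;\geq\; h_A(\Gamma)\int_0^\infty|U_t|\,\dd t = h_A(\Gamma)\norm{u}^2 .
\]

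\textbf{Upper bound for the left side.} Write $u=|u|\tau$ on $\{u\neq0\}$, with $\tau=\ee^{\ii\theta}$ locally. A pointwise computation gives
\[
|D_Au|^2=\bigl(|u|'\bigr)^2+|u|^2|\theta'-A|^2 .
\]
For the boundary term, the one-dimensional coarea formula yields
\[
\int_0^\infty|\partial U_t|\,\dd t=\int_\Gamma|\phi'|\,\dd x=\int_\Gamma 2|u|\,\bigl||u|'\bigr|\,\dd x .
\]
For the frustration term, the phase regularity paragraph shows that $\tau|_{U_t}$ is admissible in \cref{def:frustration}. By \cref{prop:frustration-flux},
\[
\delta_A(U_t)=\iota_A(U_t)\leq\int_{U_t}|\theta'-A|\,\dd x .
\]
Integrating in $t$ with Fubini gives
\[
\int_0^\infty\delta_A(U_t)\,\dd t\leq\int_\Gamma|u|^2|\theta'-A|\,\dd x .
\]
This is the crude, unsymmetrized version. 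Summing the two estimates, we get
\[
h_A\norm{u}^2\leq\int_\Gamma |u|\,\bigl(2\bigl||u|'\bigr|+|u||\theta'-A|\bigr)\,\dd x\leq 2\int_\Gamma|u|\,|D_Au|\,\dd x\leq 2\norm{u}\,\qA[u]^{1/2},
\]
using $2a+b\leq 2\sqrt{a^2+b^2}$ pointwise. This already gives $\lambda_0\geq h_A^2/4$. The stated constant $1/8$ therefore leaves room for a lossier route, and that route is likely what the paper uses.

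\textbf{Main obstacle.} The delicate point is justifying the Fubini step and the admissibility of $\tau$ on $U_t$ when $U_t$ meets a vertex. Continuity of $\tau$ at interior vertices of $\Gamma_{U_t}$ holds because $|u|>\sqrt t>0$ there, so the decomposition of $|D_Au|^2$ into modulus and phase parts is valid a.e. on $U_t$. A second subtlety is that the level-set representation of $\delta_A$ requires $\delta_A(U_t)$ to be measurable in $t$. I would handle this by noting that $U_t$ changes combinatorially only at finitely many critical levels on each piece, or simply by working with the upper integral, which suffices for an inequality. If the refined constant fails, I would fall back on the more robust formulation that runs the coarea argument with $\phi=|u|^2$ and a Cauchy--Schwarz bound for $\int|u|\,|D_Au|$, in the LLPP manner, which recovers $1/8$.
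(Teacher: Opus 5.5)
Your argument follows the paper's proof: superlevel sets $U_t=\{|u|^2>t\}$, the polar splitting $|D_Au|^2=(|u|')^2+|u|^2|\theta'-A|^2$, the coarea formula for the boundary term, Proposition~\ref{prop:frustration-flux} with Fubini for the frustration term, and Cauchy--Schwarz at the end.

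\textbf{The false step.} The pointwise inequality $2a+b\leq 2\sqrt{a^2+b^2}$ is wrong. For $a=2$, $b=1$ the left side is $5$ and the right side is $2\sqrt5<5$. By Cauchy--Schwarz the best constant is $\sqrt5$, attained along $(a,b)\propto(2,1)$, and nothing prevents $(\,||u|'|,\ |u|\,|\theta'-A|\,)$ from pointing in that direction. So your chain gives
\[
h_A(\Gamma)\norm{u}^2\leq\sqrt5\,\norm{u}\,\qA[u]^{1/2},
\]
that is, $\lambda_0(H_A)\geq h_A(\Gamma)^2/5$. It does not give $h_A(\Gamma)^2/4$.

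\textbf{Why the theorem still follows.} Since $1/5>1/8$, the corrected bound still implies the stated one. The paper uses the cruder but valid inequality $2ab+a^2c\leq 2\sqrt2\,a\sqrt{b^2+a^2c^2}$, and that is where its $1/8$ comes from. Your speculation about a ``lossier route'' and the fallback paragraph are therefore unnecessary. Once the constant is corrected, your proof is complete and coincides with the paper's.

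\textbf{Regularity of $U_t$.} The cleanest justification that $U_t$ is regular for a.e.\ $t$ is the Banach indicatrix identity $\int N(t)\,\dd t=\int|\phi'|\,\dd x<\infty$. It shows the level sets are finite for a.e.\ $t$. This is more direct than appealing to null sets of critical values.
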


\begin{proof}
Let $u\in H^1(\Gamma)$ be non-zero.  Write $\rho=|u|$.  For $t>0$ set
\[
        U_t:=\{x\in\Gamma:\rho(x)>\sqrt t\}.
\]
For almost every $t$, the set $U_t$ is a regular domain.  On $U_t$ the phase
$\tau=u/|u|$ belongs to $W^{1,1}(U_t;S^1)$ and hence, by
Proposition~\ref{prop:frustration-flux},
\begin{equation}\label{eq:frust-level}
        \delta_A(U_t)=\iota_A(U_t)
        \leq \int_{U_t}|D_A\tau|\,\dd x .
\end{equation}
On the set where $\rho>0$ one has the pointwise identity
\begin{equation}\label{eq:polar-energy}
        |D_Au|^2=|\rho'|^2+\rho^2|D_A\tau|^2 .
\end{equation}
The one-dimensional coarea formula on each edge gives
\begin{equation}\label{eq:coarea-boundary}
        \int_0^\infty |\partial U_t|\,\dd t
        \leq \int_\Gamma |(\rho^2)'|\,\dd x
        =2\int_\Gamma \rho |\rho'|\,\dd x .
\end{equation}
Furthermore, by Fubini and \eqref{eq:frust-level},
\begin{equation}\label{eq:coarea-frust}
        \int_0^\infty \delta_A(U_t)\,\dd t
        \leq \int_\Gamma \rho^2 |D_A\tau|\,\dd x .
\end{equation}
Combining \eqref{eq:polar-energy}--\eqref{eq:coarea-frust}, and using the
pointwise inequality
\[
        2ab+a^2c\leq 2\sqrt2\,a\sqrt{b^2+a^2c^2}
        \qquad (a,b,c\geq0),
\]
with $a=\rho$, $b=|\rho'|$ and $c=|D_A\tau|$, yields
\begin{equation}\label{eq:mag-coarea}
\begin{aligned}
        \int_0^\infty\bigl(|\partial U_t|+\delta_A(U_t)\bigr)\,\dd t
        &\leq 2\sqrt2\int_\Gamma \rho |D_Au|\,\dd x  \\
        &\leq 2\sqrt2\,\norm{u}_{L^2(\Gamma)}\,\qA[u]^{1/2} .
\end{aligned}
\end{equation}
On the other hand, the layer-cake representation gives
\[
        \int_0^\infty |U_t|\,\dd t
        =\int_\Gamma |u|^2\,\dd x
        =\norm{u}_{L^2(\Gamma)}^2 .
\]
By the definition of $h_A(\Gamma)$,
\[
        h_A(\Gamma)\norm{u}_{L^2}^2
        \leq \int_0^\infty\bigl(|\partial U_t|+\delta_A(U_t)\bigr)\,\dd t .
\]
Together with \eqref{eq:mag-coarea} this implies
\[
        \qA[u]\geq \frac{1}{8}h_A(\Gamma)^2
        \norm{u}_{L^2(\Gamma)}^2 .
\]
Taking the infimum over all non-zero $u\in H^1(\Gamma)$ proves
\eqref{eq:cheeger-main}.
\end{proof}

\begin{remark}[On the constant]
The factor $1/8$ is the same constant obtained by the elementary magnetic
coarea argument for closed manifolds in \cite{LangeLiuPeyerimhoffPost2015}.
It is not expected to be sharp on simple graphs.  For example, on a circle the
exact ground-state energy is the square of the distance of the flux from
$2\pi\Z$, divided by the square of the length.
\end{remark}

\section{Explicit flux lower bounds}

The constant $h_A(\Gamma)$ is already flux-explicit through
Proposition~\ref{prop:frustration-flux}.  A simpler lower bound follows by separating the
proper-domain part from the full graph.  Define
\begin{equation}\label{eq:hproper}
        h_{\rm pr}(\Gamma):=
        \inf_{\substack{U\in\cR(\Gamma)\\ U\neq \Gamma}}
        \frac{|\partial U|}{|U|} .
\end{equation}
If the graph consists of a single point this definition is void, but that case
is excluded by the assumption $\ell_e>0$.  Since every proper regular domain
has at least one boundary endpoint and $|U|\leq L_\Gamma$, one always has
\begin{equation}\label{eq:hproper-basic}
        h_{\rm pr}(\Gamma)\geq \frac{1}{L_\Gamma} .
\end{equation}
Let
\begin{equation}\label{eq:globaldelta}
        \delta_\Gamma(\Phi):=\delta_A(\Gamma).
\end{equation}
By Proposition~\ref{prop:frustration-flux}, this number depends only on the global flux
vector $\Phi(A)$.

\begin{proposition}[Global flux bound]\label{prop:global-flux-bound}
For every real $A\in L^\infty(\Gamma)$,
\begin{equation}\label{eq:h-global-bound}
        h_A(\Gamma)
        \geq
        \min\left\{h_{\rm pr}(\Gamma),\frac{\delta_\Gamma(\Phi)}{L_\Gamma}\right\} .
\end{equation}
Consequently,
\begin{equation}\label{eq:lambda-global-bound}
        \lambda_0(H_A)
        \geq \frac{1}{8}
        \min\left\{h_{\rm pr}(\Gamma),\frac{\delta_\Gamma(\Phi)}{L_\Gamma}\right\}^2 .
\end{equation}
\end{proposition}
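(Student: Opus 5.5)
The plan is to bound the Cheeger quotient $\bigl(|\partial U|+\delta_A(U)\bigr)/|U|$ from below for every $U\in\cR(\Gamma)$, splitting into the cases $U\neq\Gamma$ and $U=\Gamma$. Taking the infimum over $U$ then gives \eqref{eq:h-global-bound}. The bound \eqref{eq:lambda-global-bound} follows at once from Theorem~\ref{thm:cheeger}, because squaring preserves the inequality between nonnegative numbers.

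\emph{Case $U\neq\Gamma$.} Since $\delta_A(U)\geq 0$ by Definition~\ref{def:fluxdistance} (it is an infimum of $\ell^1$ norms), we get
\[
\frac{|\partial U|+\delta_A(U)}{|U|}\geq\frac{|\partial U|}{|U|}\geq h_{\rm pr}(\Gamma),
\]
directly from \eqref{eq:hproper}.

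One point needs care here. A regular domain of positive length that differs from $\Gamma$ only by finitely many removed points is formally a proper domain with $|U|=L_\Gamma$. Such a domain has at least one boundary endpoint, so it is covered by the case $U\neq\Gamma$ and is controlled by $h_{\rm pr}$. The quantity $\delta_A(U)$ is not needed for it. The only domain without boundary is $U=\Gamma$ itself, since a regular domain with $|\partial U|=0$ is open and closed in the connected graph $\Gamma$.

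\emph{Case $U=\Gamma$.} Here $|\partial U|=0$ and $|U|=L_\Gamma$, so the quotient is exactly $\delta_A(\Gamma)/L_\Gamma=\delta_\Gamma(\Phi)/L_\Gamma$ by \eqref{eq:globaldelta}. Note that Proposition~\ref{prop:frustration-flux} guarantees this depends only on $\Phi(A)$. The artificial vertices inserted to choose $\cC$ do not change $\delta_A(\Gamma)$, because subdividing an edge splits $\eta_I$ into pieces whose $\ell^1$ sum can be made equal, via the triangle inequality and an equal split. This point is already implicit in the independence statement of Definition~\ref{def:fluxdistance}.

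In both cases the quotient is at least $\min\{h_{\rm pr}(\Gamma),\delta_\Gamma(\Phi)/L_\Gamma\}$, which proves \eqref{eq:h-global-bound}.

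The only real subtlety, and the step I expect to need the most care, is the bookkeeping of "proper": making sure every $U\in\cR(\Gamma)$ other than $\Gamma$ itself has $|\partial U|\geq 1$ and falls under \eqref{eq:hproper}. A cruder alternative would bound $\delta_A(U)$ from below by $\delta_A(\Gamma)$ for large $U$; this fails in general, since cutting cycles destroys flux. So the argument deliberately discards $\delta_A(U)$ for all proper $U$ and uses it only for $U=\Gamma$.
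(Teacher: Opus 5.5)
Your proof is correct and follows the paper's argument: for $U\neq\Gamma$ you drop $\delta_A(U)\geq 0$, so the quotient is at least $h_{\rm pr}(\Gamma)$ by definition; for $U=\Gamma$ the quotient equals $\delta_\Gamma(\Phi)/L_\Gamma$; then you take the infimum and apply Theorem~\ref{thm:cheeger}. The point you flag as the main subtlety, that every $U\neq\Gamma$ has $|\partial U|\geq 1$, is not needed here, because $h_{\rm pr}(\Gamma)$ is defined as an infimum over all $U\neq\Gamma$; that fact is used only for \eqref{eq:hproper-basic}.
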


\begin{proof}
If $U\neq\Gamma$, then
\[
        \frac{|\partial U|+\delta_A(U)}{|U|}
        \geq \frac{|\partial U|}{|U|}
        \geq h_{\rm pr}(\Gamma).
\]
If $U=\Gamma$, then $|\partial U|=0$, $|U|=L_\Gamma$ and
\[
        \frac{|\partial U|+\delta_A(U)}{|U|}
        =\frac{\delta_\Gamma(\Phi)}{L_\Gamma}.
\]
Taking the infimum in \eqref{eq:hA} gives \eqref{eq:h-global-bound}.  The
spectral estimate \eqref{eq:lambda-global-bound} follows from
\Cref{thm:cheeger}.
\end{proof}

The flux norm can be bounded below by a more elementary distance on the torus.
For a fixed cycle basis define
\begin{equation}\label{eq:d-infty}
        d_\infty(\Phi,0):=
        \min_{m\in\Z^\beta}\max_{1\leq j\leq\beta}
        |\widehat\Phi_j+2\pi m_j|,
\end{equation}
where $\widehat\Phi$ is any real representative.

\begin{corollary}[A coarse flux-only bound]\label{cor:coarse}
For every cycle basis,
\begin{equation}\label{eq:delta-dinfty}
        \delta_\Gamma(\Phi)\geq d_\infty(\Phi,0).
\end{equation}
Therefore
\begin{equation}\label{eq:lambda-coarse}
        \lambda_0(H_A)
        \geq \frac{1}{8L_\Gamma^2}
        \min\{1,d_\infty(\Phi,0)\}^2 .
\end{equation}
\end{corollary}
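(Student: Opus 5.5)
The inequality \eqref{eq:delta-dinfty} is an $\ell^1$ versus $\ell^\infty$ comparison for the linear system defining $\delta_A(\Gamma)$. We work directly with \Cref{def:fluxdistance} for $U=\Gamma$ and the fixed cycle basis $\cC$, with cycle matrix $C=(\sigma_{je})$ with entries in $\{-1,0,1\}$.

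Let $m\in\Z^\beta$ and $\eta\in\R^{E}$ satisfy $C\eta=-\widehat\Phi+2\pi m$. For each $j$ we have
\[
        |\widehat\Phi_j-2\pi m_j|=\Bigl|\sum_{e}\sigma_{je}\eta_e\Bigr|\leq\sum_e|\sigma_{je}||\eta_e|\leq\sum_e|\eta_e| .
\]
Replacing $m$ by $-m$, this gives $\max_j|\widehat\Phi_j+2\pi m'_j|\leq\norm{\eta}_1$ for some $m'\in\Z^\beta$, hence $d_\infty(\Phi,0)\leq\norm{\eta}_1$. Taking the infimum over $\eta$ and the minimum over $m$ yields \eqref{eq:delta-dinfty}. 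The edge set used here is the one obtained after inserting the artificial vertices mentioned in Section 2. This does not matter, since $\delta_A(\Gamma)$ is defined with respect to the edge set that carries the basis, and subdividing an edge only splits $\eta_e$ into pieces with the same total $\ell^1$ mass.

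For \eqref{eq:lambda-coarse}, we combine \eqref{eq:lambda-global-bound} with \eqref{eq:hproper-basic} and \eqref{eq:delta-dinfty}. This gives
\[
        \min\Bigl\{h_{\rm pr}(\Gamma),\frac{\delta_\Gamma(\Phi)}{L_\Gamma}\Bigr\}\geq\frac{1}{L_\Gamma}\min\{1,d_\infty(\Phi,0)\}.
\]
Since $t\mapsto t^2$ is monotone on $[0,\infty)$, squaring and inserting the result into \eqref{eq:lambda-global-bound} proves the claim. In the case $\beta=0$ both sides of \eqref{eq:lambda-coarse} vanish under the stated convention, so there is nothing to prove.

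No step is a genuine obstacle. The only points needing care are bookkeeping: the sign convention $-\widehat\Phi$ versus $+\widehat\Phi$, which is absorbed by $m\mapsto -m$, and making sure that the minimum over $m$ in $d_\infty$ is attained. The latter holds because only finitely many $m$ come within any given distance of $-\widehat\Phi/2\pi$.
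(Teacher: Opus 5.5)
Your proposal is correct and follows the paper's proof essentially line by line. It uses the row-wise estimate $|(C\eta)_j|\le\sum_e|\eta_e|$ coming from the $\{-1,0,1\}$ entries of $C$, passes to the infimum over $\eta$ and $m$, and then combines \eqref{eq:hproper-basic} and \eqref{eq:delta-dinfty} with Proposition~\ref{prop:global-flux-bound}. Your remarks on absorbing the sign via $m\mapsto -m$ and on the case $\beta=0$ are harmless bookkeeping; the subdivision remark is unnecessary, since the row-wise bound holds for whatever edge set carries the cycle matrix.
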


\begin{proof}
Let $C$ be the cycle matrix.  If $C\eta=-\widehat\Phi+2\pi m$, then for every
row $j$,
\[
        |\widehat\Phi_j-2\pi m_j|
        =\left|\sum_e C_{je}\eta_e\right|
        \leq \sum_e |\eta_e|.
\]
Taking the maximum over $j$, then the infimum over $\eta$ and $m$, gives
\eqref{eq:delta-dinfty}.  Combining this with
\eqref{eq:hproper-basic} and Proposition~\ref{prop:global-flux-bound} proves
\eqref{eq:lambda-coarse}.
\end{proof}

\begin{example}[Circle]\label{ex:circle}
Let $\Gamma$ be a circle of length $L$.  There is one flux
$\Phi\in\R/2\pi\Z$, and
\[
        \delta_\Gamma(\Phi)=\dist(\widehat\Phi,2\pi\Z).
\]
After a gauge transformation the operator is the magnetic Laplacian with
constant potential $\widehat\Phi/L$ on the periodic interval.  Hence
\begin{equation}\label{eq:circle-exact}
        \lambda_0(H_A)= \frac{\dist(\widehat\Phi,2\pi\Z)^2}{L^2} .
\end{equation}
The gap closes for integral flux and is maximal at half-periodic flux.  The
Cheeger bound gives the same periodic dependence, though with a non-sharp
constant.
\end{example}

\begin{example}[Theta graph]\label{ex:theta}
Let $\Gamma$ consist of three oriented edges $e_1,e_2,e_3$ joining the same two
vertices, with lengths $\ell_1,\ell_2,\ell_3$ and
$L_\Gamma=\ell_1+\ell_2+\ell_3$.  A convenient cycle basis is
$\gamma_1=e_1-e_2$ and $\gamma_2=e_1-e_3$.  Put
\[
        \phi_1=\int_{e_1}A-\int_{e_2}A,
        \qquad
        \phi_2=\int_{e_1}A-\int_{e_3}A
        \quad \text{modulo }2\pi .
\]
For real representatives $y_1=\phi_1+2\pi m_1$ and
$y_2=\phi_2+2\pi m_2$, the minimization in \eqref{eq:deltaU} becomes
\[
        \min_{t\in\R}\bigl(|t|+|t-y_1|+|t-y_2|\bigr).
\]
The minimum is attained at a median of $\{0,y_1,y_2\}$ and equals the length of
the smallest interval containing these three points.  Consequently
\begin{equation}\label{eq:theta-delta}
        \delta_\Gamma(\Phi)
        =\min_{m_1,m_2\in\Z}
        \left(
        \max\{0,y_1,y_2\}-\min\{0,y_1,y_2\}
        \right).
\end{equation}
For the same graph the proper-domain constant is
\begin{equation}\label{eq:theta-hpr}
        h_{\rm pr}(\Gamma)=\frac{2}{L_\Gamma}.
\end{equation}
Indeed, every proper regular domain has at least two boundary endpoints, while
cutting one interior point of an edge gives a proper domain of length
$L_\Gamma$ with two boundary endpoints, in the sense of the cut-graph
convention used above.  Combining \eqref{eq:theta-delta} with
\eqref{eq:lambda-global-bound} gives the fully explicit estimate
\begin{equation}\label{eq:theta-lambda}
        \lambda_0(H_A)
        \geq \frac{1}{8L_\Gamma^2}
        \min\left\{2,\delta_\Gamma(\Phi)\right\}^2 .
\end{equation}
This lower bound is expressed only in terms of the total length and the two
cycle fluxes.
\end{example}

\section{Magnetic heat decay}

Let
\[
        T_A(t):=\ee^{-tH_A},\qquad t\geq0,
\]
be the magnetic heat semigroup.  It is a contraction on $L^2(\Gamma)$, but it
is not positivity preserving in general.  The following estimates are simply
spectral consequences of the preceding lower bounds.

\begin{corollary}[Heat semigroup decay]\label{cor:heat}
For all $f\in L^2(\Gamma)$ and all $t\geq0$,
\begin{equation}\label{eq:heat-decay-h}
        \norm{T_A(t)f}_{L^2}
        \leq
        \exp\left(-\frac{t}{8}h_A(\Gamma)^2\right)
        \norm{f}_{L^2} .
\end{equation}
In particular,
\begin{equation}\label{eq:heat-decay-flux}
        \norm{T_A(t)f}_{L^2}
        \leq
        \exp\left[-\frac{t}{8}
        \min\left\{h_{\rm pr}(\Gamma),\frac{\delta_\Gamma(\Phi)}{L_\Gamma}\right\}^2
        \right]
        \norm{f}_{L^2} .
\end{equation}
\end{corollary}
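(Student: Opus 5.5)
The plan is to apply the spectral theorem to the self-adjoint operator $H_A$ and then feed in the lower bounds already proved. Since $H_A$ is the Friedrichs operator of the closed, lower semibounded form $\qA$, it is self-adjoint. Its spectrum lies in $[\lambda_0(H_A),\infty)$, where $\lambda_0(H_A)$ is given by the variational formula \eqref{eq:lambda0}. Write $E$ for the projection-valued spectral measure of $H_A$. For $f\in L^2(\Gamma)$, the functional calculus gives
\[
        \norm{T_A(t)f}_{L^2}^2=\int_{[\lambda_0(H_A),\infty)}\ee^{-2t\lambda}\,\dd\langle E(\lambda)f,f\rangle
        \leq \ee^{-2t\lambda_0(H_A)}\norm{f}_{L^2}^2 .
\]
This uses only that $\lambda\mapsto\ee^{-2t\lambda}$ is decreasing for $t\geq0$ and that $\langle E(\cdot)f,f\rangle$ is a positive measure of total mass $\norm{f}^2$. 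Since the spectrum is discrete, one could equally expand $f$ in an orthonormal eigenbasis $\{\varphi_k\}$ with eigenvalues $\lambda_k\geq\lambda_0$. Then $\norm{T_A(t)f}^2=\sum_k\ee^{-2t\lambda_k}|\langle f,\varphi_k\rangle|^2$, and the same bound follows by Parseval.

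Taking square roots gives $\norm{T_A(t)f}\leq\ee^{-t\lambda_0(H_A)}\norm{f}$. Inserting \Cref{thm:cheeger}, namely $\lambda_0(H_A)\geq\frac18 h_A(\Gamma)^2$, and using monotonicity of $s\mapsto\ee^{-ts}$ yields \eqref{eq:heat-decay-h}. For \eqref{eq:heat-decay-flux}, I would argue in the same way, but with the bound \eqref{eq:lambda-global-bound} of \Cref{prop:global-flux-bound} in place of \Cref{thm:cheeger}. Equivalently, one can insert \eqref{eq:h-global-bound} into \eqref{eq:heat-decay-h}; this works because both quantities are nonnegative, so squaring preserves the inequality.

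There is no real obstacle here. The only point to state carefully is that $\lambda_0(H_A)$, as defined by the Rayleigh quotient over the form domain $H^1(\Gamma)$, is indeed the infimum of the spectrum of the Friedrichs operator. This is the standard min-max characterization for closed semibounded forms, and the paper already identifies the two in \eqref{eq:lambda0}.
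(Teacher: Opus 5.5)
Your proposal is correct and follows the paper's proof. The spectral theorem bounds $\norm{T_A(t)}_{2\to2}$ by $\ee^{-t\lambda_0(H_A)}$ (the paper states this as an equality), and then \Cref{thm:cheeger} and Proposition~\ref{prop:global-flux-bound} are inserted; your remarks on monotonicity, nonnegativity before squaring, and $\lambda_0(H_A)=\inf\sigma(H_A)$ only make explicit what the paper leaves implicit.
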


\begin{proof}
By the spectral theorem,
$\norm{T_A(t)}_{2\to2}=\ee^{-t\lambda_0(H_A)}$.  The estimates follow from
\Cref{thm:cheeger} and Proposition~\ref{prop:global-flux-bound}.
\end{proof}

\begin{corollary}[Magnetic energy decay]\label{cor:energy-decay}
For every $f\in L^2(\Gamma)$ and $t>0$,
\begin{equation}\label{eq:energy-general-f}
        \norm{D_A T_A(t)f}_{L^2}^2
        \leq \frac{1}{e t}
        \exp\left(-\frac{t}{8}h_A(\Gamma)^2\right)
        \norm{f}_{L^2}^2 .
\end{equation}
If $f\in H^1(\Gamma)$, then for all $t\geq0$,
\begin{equation}\label{eq:energy-form-f}
        \norm{D_A T_A(t)f}_{L^2}^2
        \leq
        \exp\left(-\frac{t}{4}h_A(\Gamma)^2\right)
        \norm{D_A f}_{L^2}^2 .
\end{equation}
The same statements hold with $h_A(\Gamma)$ replaced by the lower bound in
\eqref{eq:h-global-bound}.
\end{corollary}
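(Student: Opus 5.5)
The plan is to work entirely through the spectral theorem for the self-adjoint operator $H_A\geq0$, with spectral measure $\mu_f$ of $f$ supported in $[\lambda_0,\infty)$, where $\lambda_0=\lambda_0(H_A)\geq \frac18 h_A(\Gamma)^2$ by \Cref{thm:cheeger}. The basic identity to use is that for $g\in\dom\qA=H^1(\Gamma)$ one has $\norm{D_Ag}_{L^2}^2=\qA[g]=\norm{H_A^{1/2}g}^2$. For $t>0$ we have $T_A(t)f\in\dom H_A\subset H^1(\Gamma)$. Applying this to $g=T_A(t)f$ gives
\[
\norm{D_AT_A(t)f}_{L^2}^2=\int_{[\lambda_0,\infty)}\lambda\,\ee^{-2t\lambda}\,\dd\mu_f(\lambda).
\]

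For \eqref{eq:energy-general-f}, I will split $\lambda\ee^{-2t\lambda}=\bigl(\lambda\ee^{-t\lambda}\bigr)\ee^{-t\lambda}$. The first factor satisfies $\lambda\ee^{-t\lambda}\leq 1/(\ee t)$, since the maximum of $\lambda\mapsto\lambda\ee^{-t\lambda}$ occurs at $\lambda=1/t$. The second factor is at most $\ee^{-t\lambda_0}\leq\exp(-\frac t8h_A(\Gamma)^2)$ on the support of $\mu_f$. Integrating against $\mu_f$, whose total mass is $\norm f^2$, gives the first bound.

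For \eqref{eq:energy-form-f}, take $f\in H^1(\Gamma)=\dom H_A^{1/2}$. Then $\int\lambda\,\dd\mu_f=\qA[f]=\norm{D_Af}^2$, and $\ee^{-2t\lambda}\leq\ee^{-2t\lambda_0}\leq\exp(-\frac t4 h_A(\Gamma)^2)$. The case $t=0$ is trivial.

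The replacement of $h_A(\Gamma)$ by the lower bound in \eqref{eq:h-global-bound} is monotone, because both right-hand sides decrease as $h_A$ increases. It therefore follows directly from Proposition~\ref{prop:global-flux-bound}.

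The only point needing care is the identification $\dom\qA=\dom H_A^{1/2}$ with $\qA[g]=\norm{H_A^{1/2}g}^2$. This is the standard representation theorem for closed forms, applied to the Friedrichs operator. One should also note that the integrand $|D_Ag|^2$ is exactly the form density in \eqref{eq:qA}, so no separate argument is needed.
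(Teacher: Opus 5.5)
Your proof is correct and is essentially the paper's argument. The paper uses the form identity $\norm{D_AT_A(t)f}_{L^2}^2=\norm{H_A^{1/2}\ee^{-tH_A}f}_{L^2}^2$, the bound $\lambda\ee^{-2t\lambda}\leq(\ee t)^{-1}\ee^{-t\lambda_0(H_A)}$ for $\lambda\geq\lambda_0(H_A)$ (exactly your factorization $\lambda\ee^{-t\lambda}\cdot\ee^{-t\lambda}$), and $\ee^{-2t\lambda}\leq\ee^{-2t\lambda_0(H_A)}$ for $f\in\dom H_A^{1/2}$, each followed by \Cref{thm:cheeger}. Your remarks on the representation theorem for closed forms and on monotonicity in $h_A$ (valid because the replacement bound in \eqref{eq:h-global-bound} is nonnegative) only make explicit what the paper leaves implicit.
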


\begin{proof}
The form identity gives
\[
        \norm{D_A T_A(t)f}_{L^2}^2
        =\norm{H_A^{1/2}\ee^{-tH_A}f}_{L^2}^2 .
\]
For $f\in L^2$, the spectral theorem and the elementary inequality
$\lambda\ee^{-2t\lambda}\leq (et)^{-1}\ee^{-t\lambda_0(H_A)}$ for
$\lambda\geq\lambda_0(H_A)$ imply
\[
        \norm{H_A^{1/2}\ee^{-tH_A}f}_{L^2}^2
        \leq \frac{1}{et}\ee^{-t\lambda_0(H_A)}\norm{f}_{L^2}^2 .
\]
Using \Cref{thm:cheeger} proves \eqref{eq:energy-general-f}.  If
$f\in H^1(\Gamma)=\dom H_A^{1/2}$, then
\[
        \norm{H_A^{1/2}\ee^{-tH_A}f}_{L^2}^2
        \leq \ee^{-2t\lambda_0(H_A)}
        \norm{H_A^{1/2}f}_{L^2}^2,
\]
which yields \eqref{eq:energy-form-f} after another application of
\Cref{thm:cheeger}.
\end{proof}

\section*{Declarations}

\noindent\textbf{Competing interests.}
The author has no competing interests to declare.

\medskip
\noindent\textbf{Funding.}
No funding was received for this work.

\medskip
\noindent\textbf{Data availability.}
Data sharing is not applicable to this article as no datasets were generated or analysed.

\medskip
\noindent\textbf{Author contributions.}
The author conceived the argument, wrote the manuscript, and is responsible for the mathematical content.

\end{document}